\newcommand{\veryshortarrow}[1][3pt]{\mathrel{%
   \hbox{\rule[\dimexpr\fontdimen22\textfont2-.2pt\relax]{#1}{.4pt}}%
   \mkern-4mu\hbox{\usefont{U}{lasy}{m}{n}\symbol{41}}}}
\definecolor{languagecolor}    {rgb}{0.5, 0.5, 0.5} %
\definecolor{leankeywordcolor}  {rgb}{0.7, 0.1, 0.1} %
\definecolor{scalakeywordcolor}{rgb}{0.1, 0.1, 0.6} %
\definecolor{samplekeywordcolor}{rgb}{0.6, 0.6, 0.1} %
\definecolor{tacticcolor}      {rgb}{0.0, 0.1, 0.6} %
\definecolor{commentcolor}     {rgb}{0.4, 0.4, 0.4} %
\definecolor{sortcolor}        {rgb}{0.1, 0.5, 0.1} %
\definecolor{attributecolor}   {rgb}{0.7, 0.1, 0.1} %
\definecolor{symbolcolor}      {rgb}{0.0, 0.1, 0.6} %
\definecolor{nosymbolcolor}    {rgb}{0.0, 0.0, 0.0} %
\newcommand{\symbolcol}{\color{symbolcolor}}
\newcommand{\bdiamond}[1][fill=black]{\tikz [x=1ex,y=1ex,line width=.1ex,line join=round, yshift=-0.285ex] \draw  [#1]  (0,.5) -- (.5,1) -- (1,.5) -- (.5,0) -- (0,.5) -- cycle;}%
\itshape \color{commentcolor}]{--},
\lstdefinelanguage{lean}  {keywordstyle=[1]{\color{leankeywordcolor}},
  morekeywords=[1]{
  match, with, end, if, then, else, let, in, class, instance,
  forall, exists, Pi, fun,
  do, where, return, for, sum,
  def, inductive, CoInductive, Fixpoint, Equations, Class, Reserved, Notation, Infix, Instance,
  },
  morecomment=[l][\itshape \color{commentcolor}]{--},
  basicstyle={\ttfamily\small}}
\lstdefinelanguage{scala}{keywordstyle=[1]{\color{scalakeywordcolor}},
  morekeywords=[1]{
  import, export, protected, private, public, override, infix, extension,
  val, var, def, class, object, package, trait,
  given, using, with, extends, deriving, implicit, summon,
  match, case,
  if, then, else, break, continue, return, try, catch, for, yield, do,
  macro,
  purify,
  }}
\lstdefinelanguage{sample}{keywordstyle=[1]{\color{scalakeywordcolor}},
  morekeywords=[1]{
  do,od,map,val,mk,pure,bind,extr,link,link2,link3
  }}
\newcommand{\languagetag}{}
\renewcommand{\languagetag}{\parbox[c][0pt]{\linewidth}{\raggedleft\scriptsize\color{languagecolor} Lean}}%
\renewcommand{\languagetag}{\parbox[c][0pt]{\linewidth}{\raggedleft\scriptsize\color{languagecolor}}}%
\renewcommand{\languagetag}{\parbox[c][0pt]{\linewidth}{\raggedleft\scriptsize\color{languagecolor} NumPy}}%
\renewcommand{\languagetag}{\parbox[c][0pt]{\linewidth}{\raggedleft\scriptsize\color{languagecolor} \Lang}}%
\renewcommand{\languagetag}{\parbox[c][0pt]{\linewidth}{\raggedleft\scriptsize\color{languagecolor} \AINF}}%
\renewcommand{\symbolcol}{\color{nosymbolcolor}}\noindent\adjustbox{center,vspace=\bigskipamount}\bgroup\lstset{#1, frame=none}}
\egroup\renewcommand{\symbolcol}{\color{symbolcolor}}}
\crefname{eqnlisting}{Equations}{Equations}
\Crefname{eqnlisting}{Equations}{Equations}
\newcommand{\Lang}{\textsc{Polara}} %
\newcommand{\AINF}{A\textsubscript{\textrm{\textit{i}}}NF}
\newcommand{\Fin}[1]{\mathsf{fin}~#1}
\title{Compiling with Arrays} %
\author%
  {David Richter}%
  {Technical University of Darmstadt, Germany}%
  {david.richter@tu-darmstadt.de}
  {https://orcid.org/0000-0002-8672-0265}%
  {}
\author%
  {Timon Böhler}%
  {Technical University of Darmstadt, Germany}%
  {timon.boehler@tu-darmstadt.de}%
  {https://orcid.org/0009-0002-9964-7367}%
  {LOEWE/4a//519/05/00.002(0013)/95}
\author%
  {Pascal Weisenburger}%
  {University of St.\texorpdfstring{\,}{ }Gallen, Switzerland}%
  {pascal.weisenburger@unisg.ch}{https://orcid.org/0000-0003-1288-1485}%
  {Swiss National Science Foundation (SNSF, No.~200429)}
\author%
  {Mira Mezini}%
  {Technical University of Damstadt, Germany\texorpdfstring{\\}{; }The Hessian Center for Artificial Intelligence (hessian.AI), Germany}%
  {mezini@informatik.tu-darmstadt.de}%
  {https://orcid.org/0000-0001-6563-7537}%
  {LOEWE/4a//519/05/00.002(0013)/95;
   HMWK cluster project \emph{The Third Wave of Artificial Intelligence} (3AI).}
\authorrunning{Richter, Böhler, Weisenburger, Mezini} %
\keywords{array languages, functional programming, domain-specific languages, normalization by evaluation, common subexpression elimination, polarity, positive function type, intrinsic types} %
\begin{document}

\maketitle

\begin{abstract}
Linear algebra computations are foundational
for neural networks and machine learning, often handled through arrays.
While many functional programming languages feature lists and recursion,
arrays in linear algebra demand constant-time access and bulk operations.
To bridge this gap, some languages represent arrays
as (eager) functions instead of lists. %
In this paper, we connect this idea to a formal logical foundation
by interpreting functions as the usual negative types from polarized type theory,
and arrays as the corresponding dual positive version of the function type.
Positive types are defined to have a single elimination form
whose computational interpretation is pattern matching.
Just like (positive) product types bind two variables during pattern matching,
(positive) array types bind variables with \textit{multiplicity} during
pattern matching.
We follow a similar approach for Booleans by
introducing conditionally-defined variables.

The positive formulation for the array type enables us to
combine typed partial evaluation and common subexpression elimination
into an elegant algorithm whose result enjoys a property we call maximal fission,
which we argue can be beneficial for further optimizations.
For this purpose,
we present the novel intermediate representation \emph{indexed administrative normal form (\AINF{})},
which relies on the formal logical foundation of the positive formulation for the array type to facilitate maximal loop fission and subsequent optimizations.
\AINF{} is normal with regard to commuting conversion for both let-bindings and for-loops, leading to flat and maximally fissioned terms.
We mechanize the translation and normalization
from a simple surface language to \AINF{},
establishing that the process terminates,
preserves types, and produces maximally fissioned terms.

\end{abstract}

\section{Introduction}

Linear algebra computations are of rising importance due to their foundational role
in neural networks and other machine learning systems.
The fundamental unit of computation in linear algebra
is the multidimensional array (or just \emph{array} from now on).
Linear algebra programs are full of computations
that construct arrays from other arrays,
such as element-wise sum, matrix multiplication, convolution,
(transformer) attention, and more.

In functional programming we usually work with lists not arrays.
Lists are inductively defined data types, and processed using recursion.
Even element access on lists is implemented by recursion,
so it has to traverse the list until the element is found,
giving it running time\footnote{
  We distinguish run-time (as in run-time library)
  from running time as the time it takes to run something.}
  linear in the size of the list.
Arrays, on the other hand, should have constant-time access and
be processed using bulk operations.
As such arrays do not fit into the usual pattern of inductive data types.
A number of functional programming languages
that aim to facilitate programming of array computations have been
proposed~\cite{Gibbons17, Shaikhha19, Ragan-Kelley18}.
Aiming for high expressivity with few language constructs,
they leverage the idea that arrays can be represented as
functions~\cite{Gibbons17, Shaikhha19, Ragan-Kelley18},
or that arrays are eager functions~\cite{Paszke21}.

Functions are lazy in the sense that a 
function definition does not perform any computation and
the function body is only executed when a function is applied.
Arrays are eager in the sense that 
all contents of an array are evaluated during construction,
and array access does not perform further computation.
Prior work had this intuition on the duality between arrays and functions,
and here we ground that correspondence on proof-theoretic concepts,
which explain why arrays are eager and functions are lazy
and yield further convenient consequences.

A key insight of our work is that arrays can be interpreted
as a positively polarized version of the function type.
The connection between the positive and negative formulation of data types
and the computational interpretation of elimination forms as pattern matching
has been developed %
in the context of polarized type theory and focused
logic~\cite{Zeilberger08,Andreoli92,DownenA20, DownenA21, DownenA14, Krishnaswami09, Zeilberger09}.
Thus,
similarly to how pattern matching on a binary tuple introduces two variables,
the computational content of pattern matching on an array of size $N$ is the introduction of $N$-many variables.
Further, we explore pattern matching on Booleans by introducing \textit{conditional variables}.
This insight provides the foundation for the design of 
the \textit{indexed administrative normal form} (\AINF{}),
an intermediate representation for array computations.
We also present a simple surface array language, called
\Lang{}, and show how these changes together
(the positive formulation for the array type and a negative presentation of Booleans)
enables us to combine \emph{typed partial evaluation} (a.k.a \emph{normalization by evaluation})
and \emph{common subexpression eliminiation} into an elegant optimization algorithm
overcoming some challenges usually associated to partial evaluation.

In particular, partial evaluation of let-binding is not \emph{safe} in the sense of
always resulting in a better, or at least equal, performance than the original.
This is because a variable may appear multiple times,
hence substituting it multiple times would duplicate code.
Ideally, common subexpression elimination (CSE) would
remove redundancies introduced by partial evaluation.
But the presence of scopes, as e.g., introduced by functions, loops, and branches --
all constructs that prevail in array computations -- can complicate CSE.
While compilers can rectify these issues by using additional rules
often summarized under the general term of \textit{code motion}, 
this comes at the cost of having to decide
in what order and how often to apply these additional rules, i.e., it implies 
creating an optimization schedule, which complicates the algorithm.

Instead of these complications, with \AINF{},
we propose a novel intermediate representation for array programs
based on logical foundations that avoids the complexity of optimization schedules.
Like ANF, which is normal with regard to commuting conversion
of let-bindings implying maximal flatness,
\AINF{} is normal with regard to commuting conversions
of for-loops, thus enabling what we call maximal loop fission.
Maximal loop fission is a fundamental property
to enable further optimizations 
such as dead code elimination or common subexpression elimination.
We provide a translation of \Lang{} into \AINF{},
which performs maximal loop fission and loop invariant code motion.

\subparagraph{Contributions.}
In summary, this paper makes the following contributions:
\begin{itemize}
\item We present \AINF{}, an intermediate representation that makes use of the unconventional
idea of treating arrays as positive types from polarized type theory.
\AINF{} is normal with regard to commuting conversion for both let-bindings and for-loops,
leading to flat and maximally fissioned terms.
\item We present \Lang{}, a simple surface array language,
along with a translation of \Lang{} to \AINF{},
for which we prove termination, type preservation, and maximal fission.
\item We present an optimization algorithm for \AINF{} based on normalization by evaluation and common subexpression elimination,
for which we prove termination and type preservation.
\end{itemize}

The artifact containing our mechanization can be found anonymously at:

\url{https://figshare.com/s/cddb1c0eeaf37e42f318}

\begin{figure}
\begin{subfigure}{.25\textwidth}
\begin{pseudolisting}[basicstyle={\ttfamily\scriptsize}]
  let z =
    let y1 = x + 1
    2 * y1

  let y2 = x + 1
  ...
\end{pseudolisting}
\caption{Nested lets.}\label{fig:samples-nest}
\end{subfigure}%
\begin{subfigure}{.25\textwidth}
\begin{pseudolisting}[basicstyle={\ttfamily\scriptsize}]

  let y1 = x + 1
  let z = 2 * y1

  let y2 = x + 1
  ...
\end{pseudolisting}
\caption{Flat let-bindings.}\label{fig:samples-flat}
\end{subfigure}%
\begin{subfigure}{.25\textwidth}
\begin{pseudolisting}[basicstyle={\ttfamily\scriptsize}]
  let f = fun i:nat.
    let y1 = x + 1
    2 * y1

  let y2 = x + 1
  ...
\end{pseudolisting}
\caption{Functions.}\label{fig:samples-fun}
\end{subfigure}%
\begin{subfigure}{.25\textwidth}
\begin{pseudolisting}[basicstyle={\ttfamily\scriptsize}]
  let z = if c then
    let y1 = x + 1
    2 * y1
  else
    2 * x
  let y2 = x + 1
  ...
\end{pseudolisting}
\caption{Branches.}\label{fig:samples-if}
\end{subfigure}%

\caption{Sample programs.}
\label{fig:samples}
\end{figure}

\section{Problem Statement}

Typed partial evaluation is a powerful optimization technique~\cite{Jones93},
which can reduce excessive terms by applying computation laws. 
For example, it can reduce a projection on a pair $(a,b).1 \equiv a$ by applying $\beta$-reduction.
Or, it can eliminate superfluous branches like in
$\mathsf{if}~x~\mathsf{then}~(\mathsf{if}~x~\mathsf{then}~a~\mathsf{else}~b)~\mathsf{else}~c \;\equiv\; \mathsf{if}~x~\mathsf{then}~a~\mathsf{else}~c$
by applying uniqueness laws (e.g., $\eta$-expansion).

But, while shining on its logical foundation, partial evaluation is not a \emph{safe} optimization.
A \emph{safe} optimization has to either reduce the running time of a program
or at least preserve it.
Partial evaluation of let-bindings is not \emph{safe}
because a variable may appear multiple times,
hence substituting it multiple times would duplicate code.
Ideally, common subexpression elimination (CSE) would
remove all redundancies introduced by partial evaluation.
But scopes introduced by nested let-bindings, functions, and branches 
complicate CSE.
For illustration, below we consider a few examples of redundancies
that can occur in programs and how CSE handles them.

In \cref{fig:samples-nest}, a program with nested let-bindings is shown.
Here, the variable |z| is bound to |2 * y1|,
where |y1| is bound to the successor of |x|;
and then the variable |y2| is bound to the successor of |x| as well.
It is easy to see that |y1| is redundant with |y2|,
yet |y1| is not in scope at the definition of |y2|,
so we cannot simply replace one by the other.
The problem can be avoided by bringing the program into a form,
where no let-binding is nested inside another let-binding,
such that all previously bound variables are in scope for the whole remaining expression.
Consider the program shown in \cref{fig:samples-flat},
which is equivalent to the previous program,
but this time no expression has a subexpression.
Now, the former definition is in scope at the latter definition,
and thus |y2| can be replaced by |y1|,
thereby eliminating a duplicate subexpression.

As mentioned, functions and branches introduce scope as well,
and therefore complicate CSE.
Yet, the solution of flattening the code is not as straightforward to apply.
To illustrate the problem with functions,
consider the program shown in \cref{fig:samples-fun},
which defines a function |f|.
Inside the function the successor of |x| is bound to |y1|,
and outside the function it is bound redundantly to |y2|.
To share the expressions,
we could consider moving the definition of |y1| out of the functions.
But moving an expression out of a function is not \emph{safe},
as long as we do not know whether the function will be called at all.

To illustrate the problem with branches,
consider the program shown in \cref{fig:samples-if}.
Here, the result of a conditional expression is bound to the variable |z|,
in one branch the successor of |x| is bound to |y1|,
and after the conditional expression the successor of |x| is bound to the variable |y2|.
Similar to the function case, to share the expressions,
we could consider moving the definition of |y1| out of the branch,
but that is again not a \emph{safe} optimization,
as long as we do not know that this branch is taken.

To rectify the issues outlined above compilers use additional rules
often summarized under the general term of \textit{code motion}.
But this comes at the cost of introducing the problem of having to decide
in what order and how often to apply these additional rules
(i.e., creating an optimization schedule).

Our work avoids the complexity of optimization schedules. 
We argue that -- instead of complicating the CSE algorithm with
optimization schedules -- a better approach is to design an intermediate representation
for array programs, which like ANF bans nested expressions. 
This simplifies the optimization of array programs, which now can safely rely on algorithms 
based on logical foundations such as partial evaluation and CSE. 
The novel intermediate representation, 
called \AINF{}, is informally presented in the following along with a simple surface arrays language 
and the optimized translation of the latter to the former.

\section{\AINF{}, \Lang{}, and Simplified Optimizations}
\label{sec:challenge-solution}
We describe the two key insights 
on which our approach is based (\cref{sec:funs-and-arrays,sec:sums}),
introduce \Lang{} and \AINF{} by example (\cref{sec:byExample}),
and explain how \AINF{} simplifies optimizations (\cref{sec:simplifications}).

\subsection{The Duality of Functions and Arrays}
\label{sec:funs-and-arrays}

The list type is an inductive datatype defined by its constructors |nil| for the empty list
and |cons| for constructing a list from another list with an additional element.
Accordingly, algorithms over lists work by recursion,
expressed with functions and branches.
As element access on lists is implemented by recursion,
it has to traverse the list until the element is found,
giving it running time linear in the size of the list.
Arrays, on the other hand, enjoy constant-time access
and feature bulk operations.
The consequence is that arrays do not to fit into the usual
pattern of inductive data types.
Nevertheless,
because custom semantics would require further proofs to ensure soundness,
arrays are occasionally modelled as lists,
with the hint that the actual running time can differ
(in Lean for example\footnote{https://lean-lang.org/lean4/doc/array.html}).

In functional array languages,
we exploit the equivalence of an array of type $X$ and length $n$
with a function from a natural number below $n$ to a value of type $X$.
Forward, this equivalence allow us to access (|get|) elements of an array by its index.
Backward, we create (|tabulate|) an array
from a function describing each individual element based on their index.
The forward direction is indeed already very much ingrained in everyday programming,
as array access |a[i]| and function application |a(i)| look very much alike in many languages,
and even share identical syntax in some.

\begin{center}
$\text{Array}_n~X ~\leftrightarrow~ (\text{Fin}_n \to X)$ \\[1em]
$\text{get}:      \text{Array}_n~X \to (\text{Fin}_n \to X)$ \\
$\text{tabulate}: (\text{Fin}_n \to X) \to \text{Array}_n~X$
\end{center}

But something important changes in the conversion from a function to an array,
and vice versa. Functions are \emph{lazy},
in the sense that the evaluation of a function is delayed until it is applied, while arrays are \emph{eager},
in the sense that all elements of an array have already been evaluated
and on array access only need to be looked up. Also, in a language with (side) effects,
the two types can be distinguished in that a function application can trigger effects,
while an array access cannot trigger effects.
Dually, constructing a function cannot trigger effects,
while constructing an array can trigger effects.

We can put the relationship between functions and arrays on a logical foundation
by considering the difference between positive and negative types \cite{Zeilberger08}.
A positive type is defined by a set of constructors (introduction forms),
and we get a single corresponding destructor (elimination form) for it
with one continuation for the content of each possible constructor (pattern matching).
A negative type is defined by a set of destructors,
and we get a single corresponding constructor for it 
that has to provide one value for each destructor to extract (copattern matching).
Positive types are usually associated with eager (call-by-value) evaluation,
and negative types with lazy (call-by-name) evaluation.
Many types can be defined either as a positive or as a negative type.
For illustration, we consider the positive and the negative formulations of the product type below.

\subparagraph{Products as Positive and as Negative Types.}
The product type as a positive type $\times$
has a single constructor $(a,~b)$ (\textsc{Intro}).
A corresponding destructor (\textsc{Elim})
can be systematically derived as pattern matching on the constructor.
Reduction (\textsc{Beta}) occurs when a destructor is applied
to a constructor, and they eliminate each other.
\begin{mathpar}
\inferrule[Intro]
  {\Gamma \vdash a: A \\
   \Gamma \vdash b: B}
  {\Gamma \vdash (a,~b): A \times B}

\inferrule[Elim]
  {\Gamma \vdash p: A \times B \\
   \Gamma,~a:A,~b:B \vdash c: C}{
   \Gamma \vdash \mathsf{let}~ (a,~b) = p;~ c: C}

\inferrule[Beta]
  {\Gamma \vdash a: A \\
   \Gamma \vdash b: B \\
   \Gamma,~ x:A,~ y:B \vdash c:C}
  {\Gamma \vdash (\mathsf{let}~ (x,~y) = (a,~b);~ c) \equiv c[x:=a,~ y:=b]}
\end{mathpar}

Alternatively, products can also be defined as negatives types $\otimes$.
In this case, we give primacy to a set of destructors, namely
the projections $p.\mathsf{fst}$ and $p.\mathsf{snd}$ (\textsc{Elim1}, \textsc{Elim})
to access the individual elements of a tuple $p$,
and derive systematically the corresponding constructor (\textsc{Intro})
providing one value for each destructor to extract.
Beta reduction occurs (\textsc{Beta1}, \textsc{Beta1})
when a destructor is applied to a constructor
by extracting the corresponding value.

\begin{mathpar}
\inferrule[Elim1]{\Gamma \vdash p: A \otimes B}{\Gamma \vdash p.\mathsf{fst}: A}

\inferrule[Elim2]{\Gamma \vdash p: A \otimes B}{\Gamma \vdash p.\mathsf{snd}: B}

\inferrule[Intro]{\Gamma \vdash a: A \\ \Gamma \vdash b: B}{\Gamma \vdash (\mathsf{fst} = a;~ \mathsf{snd} = b) : A \otimes B}
\\
\inferrule[Beta1]{\Gamma \vdash a: A \\ \Gamma \vdash b: B}{\Gamma \vdash (\mathsf{fst} = a, \mathsf{snd} = b).\mathsf{fst} = a}

\inferrule[Beta2]{\Gamma \vdash a: A \\ \Gamma \vdash b: B}{\Gamma \vdash (\mathsf{fst} = a, \mathsf{snd} = b).\mathsf{snd} = b}
\end{mathpar}

\subparagraph{Functions as Negative and Positive Types.}
Usually, the function type is considered a negative type.
It has a single destructor -- function application $f\,a$ (\textsc{Elim}) --
and the corresponding constructor is systematically derived
by copattern matching on the possible destructors (\textsc{Intro}).
When a destructor is applied to the constructor,
we extract the value provided as the body of the function,
and substitute the variable with the argument (\textsc{Beta}).

\begin{mathpar}
\inferrule[Elim]{\Gamma \vdash f: A \to B \\ \Gamma \vdash a: A}{\Gamma \vdash f~a: B}

\inferrule[Intro]{\Gamma, a: A \vdash b: B}{\Gamma \vdash \mathsf{fun}~ a.~ b : A \to B}

\inferrule[Beta]{\Gamma, x: A \vdash b: B \\ \Gamma \vdash a: A}{\Gamma \vdash (\mathsf{fun}~x.~b)~a \equiv b[x:=a]}
\end{mathpar}

The function type can also be represented as a positive type.
In this case, the function is primarily defined through its constructor,
and the destructor is systematically derived from pattern matching on the constructor.
But the interpretation of positive function types comes with some challenges for the metatheory.
The introduction form of a function turns a term-in-the-context-of-a-variable $a:A \vdash b:B$
into a function $(\mathsf{fun}~a.~b): A \to B$.
Thus, the corresponding elimination form of a function $(\mathsf{fun}~a.~b): A \to B$
should introduce a variable of type term-in-the-context-of-a-variable $a:A \vdash b:B$ into the context.
But to properly model that, we need a judgment where we have a context in the context,
in other words a ``higher-order judgment''~\cite{nlab:function_type, Nanevski08}.
A judgment is higher-order
when an entailment $\vdash$ occurs inside the context of another entailment
An implementation of higher-order judgments needs to ensure
that a variable which has such a judgment as a type
is only used in larger contexts,
where all required variables are available.
For example, $b: (a: A \vdash B) \vdash b: B$ is invalid,
given that $b$ must occur in a context where an $a:A$ is available;
while $b: (a: A \vdash B) \vdash (\mathsf{fun}~a{:}A.~ b): A \to B$ is valid,
because a variable $a:A$ has been introduced such that the use of $b$ afterwards is safe.
\begin{mathpar}
\inferrule[Intro]
  {\Gamma, a: A \vdash b: B}
  {\Gamma \vdash (\mathsf{fun}~a.~b): A \to B}

\inferrule[Elim]
  {\Gamma \vdash f: A \to B \\ \Gamma, x:(a: A \vdash B) \vdash c: C}
  {\Gamma \vdash \mathsf{let}~ (\mathsf{fun}~ a.~ x) = f;~ c : C}

\inferrule[Beta]
  {\Gamma, a:A \vdash b: B \\ \Gamma, x:(a: A \vdash B) \vdash c: C}
  {\Gamma \vdash \mathsf{let}~(\mathsf{fun}~a.~x) = (\mathsf{fun}~a.~b);~ c \equiv c[x:=b]}
\end{mathpar}

\subparagraph{Interpreting positive function types as arrays.}

We avoid the challenges of interpreting functions as positive types by proposing 
to interpret positive function types as arrays,
re-interpreting the rules of the positive function type as the rules of the array type.
We require the argument type to be the type of natural numbers below some number |n|,
which corresponds to the index type of an array.
Traditionally, functional programming works with lists and not arrays,
therefore the constant-time access of arrays is not accurately represented by the model;
in functional array languages~\cite{Henriksen17, Shaikhha19}, arrays tend to live in the shadow of the function type,
as their introduction and elimination forms depend on (higher-order) functions.
Interpreting the array as a positive function type makes them independent
and puts them on an equal footing to the other types with regard to their logical foundation.

We distinguish positive functions, i.e., arrays, from normal functions
by using $\Rightarrow$ for the type,
writing $(\mathsf{for}~ a.~ b)$ as the introduction form for arrays,
while the elimination form is given, as always,
by pattern matching on all possible introduction forms:
\begin{mathpar}
\inferrule*[lab=Intro, right=\textup{$A = \text{Fin}_n$}]
  {\Gamma, x: A \vdash b: B}
  {\Gamma \vdash (\mathsf{for}~x.~b): A \Rightarrow B}

\inferrule*[lab=Elim, right=\textup{$A = \text{Fin}_n$}]
  {\Gamma \vdash f: A \Rightarrow B \\ \Gamma, x:(a: A \vdash B) \vdash c: C}
  {\Gamma \vdash \mathsf{let}~ (\mathsf{for}~ a.~ x) = f;~ c : C}

\inferrule[Beta]
  {\Gamma, a:A \vdash b: B \\ \Gamma, x:(a: A \vdash B) \vdash c: C}
  {\Gamma \vdash \mathsf{let}~(\mathsf{for}~a.~x) = (\mathsf{for}~a.~b);~ c \equiv c[x:=b]}
\end{mathpar}

Intuitively,
analogously to how pattern matching on a product
introduces two variables (one for each projection of the product),
pattern matching on an array introduces a family of variables,
one for each element.
For illustration, consider
  $b[a:=2]$
as
  $b_2$,
and
  $(\mathsf{let}~ (\mathsf{for}~ a.~ b) = f;~ c)$
as
  $(\mathsf{let}~ (b_0, b_1, ..., b_{n-1}) = f;~ c)$.

\subparagraph{Arrays Enable CSE.}
Flattening let-bindings usually helps CSE.
More precisely the rule that is used to create the ANF representation is the let-let commuting conversion:
\begin{pseudolisting}
(let y = (let x = e1; e2); e3) ≡
(let x = e1; let y = e2; e3)
\end{pseudolisting}

Note that the following let-fun commuting conversion is not \emph{safe}
because on the left-hand side |e1| is evaluated at most once,
even if it was used multiple times in |e2|;
but on the right-hand side it will be evaluated once for each usage in |e2|.
\begin{pseudolisting}
(let y = (fun i. let x = e1; e2); e3) ≡
(let (fun i. x) = (fun i. e1); let y = (fun i. e2); e3)
\end{pseudolisting}

On the other hand, the let-for commuting conversion that we use in \AINF{} below is \emph{safe}
and states that the following two lines are equivalent.
As array construction is evaluated eagerly,
the expression |e1| is evaluated just once for each iteration,
on both sides of the equation.
\begin{pseudolisting}
(let y = (for i. let x = e1; e2); e3) ≡
(let (for i. x) = (for i. e1); let y = (for i. e2); e3)
\end{pseudolisting}

Intuitively, this rule allows us to split a complex loop into multiple simpler loops,
hence it is closely connected to loop \emph{fission}.
If we use this rule to split every loop as much as possible, then we end up with a normal
form in which every loop only contains a single variable.
First performing loop fission as much as possible
helps with implementing other optimizations,
for example allows CSE to remove redundancies that it could not otherwise eliminate.

The use of this rule means that frequently both the left side and the right side
of a variable definition are surrounded by the same form
(on the left as a pattern form, on the right as a term form), so we will introduce some syntactic sugar and write
|let for i. (x = e1); e2| to mean |let (for i. x) = (for i. e1); e2| in the following.

\subsection{Lifting Branching into the Context}
\label{sec:sums}

A different problem arises with values of the Boolean type,
Booleans have two constructors, $\mathsf{true}$ and $\mathsf{false}$
(\textsc{Intro1}, \textsc{Intro2}).
They have one destructor, the conditional expression (\textsc{Elim}),
where one continuation is provided for each constructor,
the consequent and the alternative.
A conditional reduces to the consequent when the condition is true (\textsc{Beta1}),
and to the alternative when the condition is false (\textsc{Beta2}).
\begin{mathpar}
\inferrule[Intro1]
  {~}
  {\Gamma \vdash \mathsf{true}: \mathsf{bool}}

\inferrule[Intro2]
  {~}
  {\Gamma \vdash \mathsf{false}: \mathsf{bool}}

\inferrule[Elim]
  {\Gamma \vdash p: \mathsf{bool} \\
   \Gamma \vdash e: C \\
   \Gamma \vdash f: C}
  {\Gamma \vdash \mathsf{if}~p~\mathsf{then}~e~\mathsf{else}~f}

\inferrule[Beta1]
  {\Gamma \vdash e : C \\ \Gamma \vdash f : C }
  {\Gamma \vdash \mathsf{if}~\mathsf{true}~\mathsf{then}~e~\mathsf{else}~f \equiv e}

\inferrule[Beta2]
  {\Gamma \vdash e : C \\ \Gamma \vdash f : C }
  {\Gamma \vdash \mathsf{if}~\mathsf{false}~\mathsf{then}~e~\mathsf{else}~f \equiv f}
\end{mathpar}

The let-if commuting conversion below is \emph{safe} with regard to running time.
But applying the commuting conversion duplicates the expression |e3|.
If |e3| is a big expression,
then even if duplicating it in different branches may not impact the running time,
having nested branches will lead to a blow-up of the code size exponential in the number of branches
(which is also bad for the compiling time).

\begin{pseudolisting}
(let z = (if e0 then e1 else e2); e3) ≡
(if e0 then (let z=e1; e3) else (let z=e2; e3))
\end{pseudolisting}

Essentially, the above rule bubbles up conditionals to the top of the expression.
Instead, we propose to trickle down the conditionals
using conditionally defined variables and state a new |let-if| commuting conversion 
that does not duplicate the branches.
To avoid duplicating branches, we introduce a syntactically single-branch |if|.
A single-branch |if| produces a conditional value, i.e., a value that can only be accessed if the condition is true, and the single-branch |if!| produces a value that can only be accessed if the condition is false.

Analogously, we have let-bindings for conditional variables |let (if e. x) := ...| (or |let (if! e. x) := ...|), which define a variable |x| that can only be accessed if the condition |e| is true (or false, respectively).
Two simple syntactical conditions can be used to check whether a conditional variable is accessible:
First, a conditional variable is accessible on the right-hand side of the definition of another conditional variable that has the same condition.
In other words, a conditional variable can be used to define the value of another conditional variable with the same condition. %
Second, a conditional variable is accessible in one of the branches of a standard two-branched |if| condition.
In other words, two mutually exclusive conditional variables can be combined with an |if| to define a non-conditional variable.

Using the single-branch if, we can now express a |let-if| commuting conversion, that does duplicate |e3|. Here the double-branched if is seperated into two single-branch ifs and |e3| remains to be executed once afterwards:
\begin{pseudolisting}
(let z = (if e0 then e1 else e2);
 e3)
≡
(let (if  e0. z1) = (if e0. e1);
 let (if! e0. z2) = (if! e0. e2);
 let          z   = (if e0 then z1 else z2);
 e3)
\end{pseudolisting}

Correct use of conditional values will thus frequently lead to the use of the same condition 
on the variable bound by a |let| and in a single-branched if in the bound expression. 
We will thus make use of syntactic sugar writing |let if e0. (x1 = e1); e2| to mean |let (if e0. x1) = (if e0. e1); e2|.

\subsection{\Lang{} and \AINF{} by Example}
\label{sec:byExample}

In this section, we informally introduce both \Lang{} and \AINF{}. 
We do so by giving examples of array operations and programs in \Lang{}
and showing the result of compiling them to \AINF{}. 
Just for reference, we will also provide versions of the \Lang{} examples 
written in the widely-adopted array programming library NumPy~\cite{Harris20}. 
Please note that the focus of this paper lies in the exploration of \AINF{}, rather than \Lang{}. 
The latter serves merely as a vehicle to elucidate how \AINF{} effectively facilitates 
optimizations during the translation process from a surface array language. 
Hence, compared to NumPy, we have purposely kept it closer to low-level imperative code.

In \Lang{}, an expression |e| is either a constant |c|, or an arithmetic operator |e + e|, function application |e e|, array access |a[i]|, array construction |for i:n. e|, or summation |sum i:n. e|, as well as pairs |(e, e)| and projection |e.1|, |e.2|.
The array construction |for i:n. e| constructs an array of length |n| by repeatedly evaluating |e|, with |i| bound to the values from |0| to |n-1|.
For example, |for i:3. 10*i| evaluates to |[0, 10, 20]|.
We will write |for i. e| if the size of the array can be inferred from the context.
Summation is syntactic sugar for constructing an array and then summing it, so |sum i.e ≡ sum (for i. e)|.

\begin{table}
\caption{Common linear algebra operations in \Lang{}; NumPy for reference. }
\label{tbl:linalgops}
\begin{tabular}{l l l l} \toprule
  Name & NumPy & \Lang{} \\ \midrule
  Vector addition & |v + w| & |for i. v[i] + w[i]| \\ 
  Matrix addition & |A + B| & |for i j. A[i,j] + B[i,j]| \\ 
  Element-wise product (vector) & |v * w| & |for i. v[i] * w[i]| \\ 
  Element-wise product (matrix) & |A * B| & |for i j. A[i,j] * B[i,j]| \\
  Outer product & |np.multiply.outer(A, B)| & |for i j k l. A[i,j] * B[k,l]| \\
  Trace & |A.trace()| & |sum i. A[i, i]| \\ 
  Transpose & |A.transpose()| & |for i j. A[j, i]|  \\ 
  Matrix multiplication & |A @ B| & |for i k. sum j. A[i,j] * A[j,k]| \\
  Matrix-vector multiplication & |A @ v| & |for i. sum j. A[i,j] * v[j]| \\ \bottomrule
\end{tabular}
\end{table}

In \cref{tbl:linalgops}, we list several common linear algebra operations and
compare how they can be expressed using the linear algebra library NumPy and \Lang{}.
We assume as given that the vectors |v, w| and matrices
|A, B| are of appropriate sizes.

\subparagraph{Dense Layer.}

As a slightly more involved example,
we show how a dense neural network layer can be implemented
in NumPy, \Lang{}, and \AINF{}, respectively.
The NumPy example makes use of the built-in matrix multiplication
operator |@|. While such an operation can be implemented as function in \Lang{}, 
we show an example that only relies on the few \Lang{} primitives,
using the |for| looping construct and indexing.
Likewise, while the NumPy definition uses the built-in |maximum| function and addition, 
the \Lang{} version uses an explicit loop
that performs element-wise multiplication and additions across the vectors.

Compared to the untyped NumPy program, 
we also declare the types of the arguments. A type \lstinline{n=>flt} describes
an array of floating point numbers with size \lstinline{n}.

Obviously, the corresponding \AINF{} program (Figure \ref{fig:ainfdense}) is rather lengthy,
as every intermediate result gets assigned to a variable,
just like in ANF.

\noindent
\begin{minipage}[t]{.42\textwidth}
\begin{numpylisting}
def dense(b, W, x):
  return np.maximum(0, W @ x + b)
\end{numpylisting}
\end{minipage}
\begin{minipage}[t]{.58\textwidth}
\begin{langlisting}
dense(b:n=>flt, W:n=>m=>flt, x:m=>flt): n=>flt :=
  for i. max(0, (sum j. W[i][j] * x[j]) + b[i])
\end{langlisting}
\end{minipage}

\subparagraph{Convolution.}

We now describe how to express convolution in \Lang{}.
Convolution involves moving a vector, called the kernel, across another vector
while repeatedly calculating the dot product.
For this example, we need to subtract two indices to indicate that we shift one array
while keeping the other as it is.
In the \AINF{} example (Figure \ref{fig:ainfconv}), we create a two-dimensional array |x10| containing all the
possibilities for shifting the array |y|. For example if |y = [1,2,3]|, then |x10| is a matrix of size |3×3|
so that |tmp1 = [[1,2,3], [2,3,1], [3,1,2]]|. We then form the dot product of each entry with |x|.

\noindent
\begin{minipage}[t]{.5\textwidth}
\begin{numpylisting}
def conv(x, y):
  return np.convolve(x, y, 'same')
\end{numpylisting}
\end{minipage}
\begin{minipage}[t]{.5\textwidth}
\begin{langlisting}
conv(x: n=>flt, y: (n+m-1)=>flt): m=>flt :=
  for i. sum j. x[j] * y[j+i]
\end{langlisting}
\end{minipage}

\subparagraph{Black-Scholes.}
Black-Scholes is a simplified mathematical model for the dynamics of derivative investments in financial markets.
The Black-Scholes formula provides an estimate for the price of the call option (buying) and the put option (selling) of a European-style option given the original price $S$, the strike price $K$, the expiration time $T$, the force-of-risk $r$ and the standard deviation $\sigma$.
The interesting part, from an array programming language's perspective, is that with a naive implementation of the calls and puts as separate functions, common subexpression elimination is not able to identify the redundant computation across these functions over two separate loops.

In particular, note the redundant definition of |d1| and |d2| in the |calls| and the |puts| function. This code gets inlined into the |blackScholes| function, but the two function calls land in separate loops. Nevertheless, using loop fission the output can be reduced to just 22 lines of \AINF{}
(see Figure \ref{fig:ainfbs}); without fission and CSE the generated code would have 54 lines.

\begin{langlisting}
calls(S: flt, K: flt, T: flt: r: flt: sigma: flt): flt :=
  let d1 := (log (S / K) + (r + sigma * sigma / 2) * T) / (sigma * sqrt T)
  let d2 := d1 - sigma * sqrt T
  S * normCdf d1 - K * exp (0 - var r * var T) * normCdf d2

puts(S: flt, K: flt, T: flt: r: flt: sigma: flt): flt :=
  let d1 := (log (S / K) + (r + sigma * sigma / 2) * T) / (sigma * sqrt T)
  let d2 := d1 - sigma * sqrt T
  K * exp (0 - r * T) * normCdf (0 - d2) - S * normCdf (0 - d1)

blackScholes(arr: (n => flt)): n => (flt × flt) :=
  let S := 1; let K := 1; let r := 1; let sigma := 1
  let Calls: (n => flt) := for i. calls(S, K, arr[i], r, sigma)
  let Puts:  (n => flt) := for i. puts(S,  K, arr[i], r, sigma)
  for i. (Calls[i], Puts[i])
\end{langlisting}

\begin{figure}
\begin{minipage}[t]{.5\textwidth}
\begin{subfigure}[t]{\linewidth}
\begin{ainflisting}[basicstyle={\ttfamily\scriptsize}]
dense(b: n => flt, W: n => m => flt,
      x: m => flt): n => flt :=
let for i:n,      (x0 : flt      := 0          )
let for i:n, j:m, (x1 : m => flt := W[i]       )
let for i:n, j:m, (x2 : flt      := x1[j]      )
let for i:n, j:m, (x3 : flt      := x[j]       )
let for i:n, j:m, (x4 : flt      := x2 * x3    )
let for i:n,      (x5 : m => flt := for j:m, x4)
let for i:n,      (x6 : flt      := sum x5     )
let for i:n,      (x7 : flt      := b[i]       )
let for i:n,      (x8 : flt      := x6 + x7    )
let for i:n,      (x9 : flt      := max x0 x8  )
let               (x10: n => flt := for i:n, x9)
x10
\end{ainflisting}
\caption{\AINF{} for a dense layer.}
\label{fig:ainfdense}
\end{subfigure}
\begin{subfigure}[t]{\linewidth}
\begin{ainflisting}[basicstyle={\ttfamily\scriptsize}]
conv(x: n => flt, y: p => flt): m => flt :=
let for i:m, j:n, (x0 : flt      := x[j]       )
let for i:m, j:n, (x1 : fin p    := j + i      )
let for i:m, j:n, (x2 : flt      := y[x1]      )
let for i:m, j:n, (x3 : flt      := x0 * x2    )
let for i:m,      (x4 : n => flt := for j:n, x3)
let for i:m,      (x5 : flt      := sum x4     )
let               (x6 : m => flt := for i:m. x5)
x6

  where p = n+m-1
\end{ainflisting}
\caption{\AINF{} for convolution.}
\label{fig:ainfconv}
\end{subfigure}
\end{minipage}
\begin{minipage}[t]{.5\textwidth}
\begin{subfigure}[t]{\linewidth}
\begin{ainflisting}[basicstyle={\ttfamily\scriptsize}]
blackScholes(arr: n => flt):  n => flt × flt :=
let for i1:n, (x0  : flt := 1.500000          )
let for i1:n, (x1  : flt := i0[i1]            )
let for i1:n, (x2  : flt := x0 * x1           )
let for i1:n, (x4  : flt := sqrt x1           )
let for i1:n, (x5  : flt := x2 / x4           )
let for i1:n, (x6  : flt := normCdf x5        )
let for i1:n, (x7  : flt := 0.000000          )
let for i1:n, (x9  : flt := x7  - x1          )
let for i1:n, (x10 : flt := exp x9            )
let for i1:n, (x19 : flt := x5  - x4          )
let for i1:n, (x20 : flt := normCdf x19       )
let for i1:n, (x21 : flt := x10 * x20         )
let for i1:n, (x22 : flt := x6  - x21         )
let for i1:n, (x37 : flt := x7  - x19         )
let for i1:n, (x38 : flt := normCdf x37       )
let for i1:n, (x39 : flt := x10 * x38         )
let for i1:n, (x47 : flt := x7  - x5          )
let for i1:n, (x48 : flt := normCdf x47       )
let for i1:n, (x49 : flt := x39 - x48         )
let for i1:n, (x50 : (flt × flt) := (x22, x49))
let (x51 : (n => flt × flt) := for i1:1, x50  )
x51
\end{ainflisting}
\caption{\AINF{} for a Black-Scholes.}
\label{fig:ainfbs}
\end{subfigure}
\end{minipage}
\caption{Generated \AINF{}.}
\label{fig:ainfgen}
\end{figure}

\subsection{Simplifying Optimizations with \AINF{}}
\label{sec:simplifications}

This section provides an overview showing how some classical optimizations can be applied to \AINF{}
and illustrates why our novel normal form simplifies their implementation.

To improve readability, we will sometimes present \AINF{} code in a way that deviates from the actual representation by putting multiple operations in one line,
when this does not affect the optimization.

\subparagraph{Loop Fission.}
Loop fission is an optimization pass that prepares code to improve the effectiveness of dead code elimination.
Standard dead code elimination can only delete whole loops.
Loop fission splits a loop into parts, so that individual parts that are not used 
can be removed.
In \AINF{}, the body of each loop is a single variable,
which means that any \AINF{} program is necessarily as fissioned as possible.
A simple partial evaluation pass on \AINF{} can then perform dead-code elimination.

Below, the left side shows an array computation in \Lang. On the right, that same computation has been transformed to \AINF{},
which implies loop fission.
The code follows the principle from ANF that expressions should be atomic, i.e., 
only have one operation. In terms of array programming, this leads to the first loop being split into three loops.
Partial evaluation could then reduce the full program to just |for i. f(xs[i])|.

\noindent\begin{minipage}{.45\linewidth}
\begin{langlisting}
let x = for i.
  let ys = f(xs[i])
  let zs = f(xs[i])
  (ys, zs)
for i.
  fst x[i]
\end{langlisting}
\end{minipage}
\hspace{.1\linewidth}
\begin{minipage}{.45\linewidth}
\begin{ainflisting}
let for i. (ys = f(xs[i]))
let for i. (zs = f(xs[i]))
let for i. (x  = (ys, zs))
let for i. (y  = fst x)
let z = for i. y
z
\end{ainflisting}
\end{minipage}

A further advantage of loop fission is that it improves loop \textit{fusion}:
Splitting a program into as many loops as possible, gives more freedom to the algorithm
for combining loops again.

\subparagraph{Common subexpression elimination.}

We can now see how \AINF{} helps with CSE.
On the left, we recapitulate the example from above; on the right, we can see the
same program in \AINF{}.

\noindent\hspace{.11\linewidth}\begin{minipage}{.33\linewidth}
\begin{langlisting}
let f = for i.
  let y = x+1
  let y' = 2*y
  y'
let z = x+1
...
\end{langlisting}
\end{minipage}
\hspace{.11\linewidth}
\begin{minipage}{.33\linewidth}
\begin{ainflisting}
let for i. (y  = x + 1)
let for i. (y' = 2 * y)
let for i. (f  = y')

let for i. (z  = x + 1)
...
\end{ainflisting}
\end{minipage}

The loop computing |f| has been broken down into two loops. As a result, |z| is clearly redundant, as it performs the same computation
in the same scope as |y|. Therefore, compared to array languages using higher-order functions,
\AINF{} allows us to use the simple, standard approach to CSE,
and nonetheless remove redundancies between expressions inside and outside of loops.

\subparagraph{Loop invariant code motion.}
Loop invariant code motion (LICM), which moves constants out of a loop, is another optimization that benefits from \AINF{}.
In \AINF{}, this would correspond to dropping an unused index;
hence, the implementation of LICM is very simple.
On the left, we generate an array |ys|, in which every element is the constant |1|.
We then compute an array |zs| that makes use of |ys|.
Notice that the index |i| that is bound in the creation of |ys| is not used.
We can therefore eliminate that loop, 
adjusting uses of |ys| accordingly from |ys[i:=j]| to |ys|, as seen on the right.
\\

\noindent\hspace{.11\linewidth}\begin{minipage}{.33\linewidth}
\begin{ainflisting}
let for i. (ys = 1)
let zs = for i. f(xs[i], ys)
...
\end{ainflisting}
\end{minipage}
\hspace{.11\linewidth}
\begin{minipage}{.33\linewidth}
\begin{ainflisting}
let ys = 1
let zs = for i. f(xs[i], ys)
...
\end{ainflisting}
\end{minipage}

\section{Mechanization}

\begin{figure}
\begin{center}
$n \in \mathbb{N} ~~~~ f \in \mathbb{F} ~~~~ x \in \text{Var} ~~~~ i \in \text{Idx}$ \\[1em]
$\begin{array}{lrcl}
\text{Types} & t &\Coloneqq&
  \Fin{n} \mid \mathsf{flt} \mid t \;\hat\times\; t \mid t \;\hat\to\; t \mid n \;\hat\Rightarrow\; t \\
\text{Constants} & c &\Coloneqq&
  n \mid f \mid \hat+ \mid \hat\cdot \mid \hat- \mid \hat/ \mid \mathsf{app} \mid \mathsf{get} \mid \mathsf{pair} \mid \mathsf{fst} \mid \mathsf{snd} \mid \mathsf{sum} \\
\text{\Lang{}} & e &\Coloneqq&
  c~\overline{e} \mid x \mid \mathsf{fun}~x{:}t.~e \mid \mathsf{for}~x{:}n.~ e \mid \mathsf{ite}~e~e~e \mid \mathsf{let}~ e;~ e \\[1em]
\text{\AINF{}} & a &\Coloneqq&
  \mathsf{let}~ C[x = p];~ a \mid x \\
\text{Primitives} & p &\Coloneqq&
  c~\overline{x} \mid i \mid \mathsf{fun}~i{:}t.~x \mid \mathsf{for}~i{:}n.~ x \mid \mathsf{ite}~x~x~x \\
\text{Scope Contexts} & C[\cdot] &\Coloneqq&
  \cdot \mid C[\mathsf{fun}~i{:}t.~\cdot] \mid C[\mathsf{for}~i{:}n.~\cdot] \mid C[\mathsf{if}~x{\not=}0.~\cdot] \mid C[\mathsf{if}~x{=}0.~\cdot]
\end{array}$
\end{center}
\caption{\Lang{} and \AINF{}}
\label{fig:syntax}
\end{figure}

\begin{figure}
\begin{mathpar}
    \inferrule{n < m}{\vdash n : \Fin{m}}\quad
    \vdash f : \mathsf{flt} \quad
    \vdash \mathsf{app} : (t_1 \hat\to t_2) \to t_1 \to t_2 \quad
    \vdash \mathsf{get} : (n \hat\Rightarrow t_1) \to \Fin{n} \to t_1 \\
    \vdash \mathsf{pair} : t_1 \to t_2 \to (t_1 \hat\times t_2) \quad
    \vdash \mathsf{fst} : (t_1 \hat\times t_2) \to t_1 \quad
    \vdash \mathsf{snd} : (t_1 \hat\times t_2) \to t_2 \\
    \vdash \hat + : \Fin{n} \to \Fin{m} \to \Fin{(n+m-1)} \quad
    \vdash \hat + : \mathsf{flt} \to \mathsf{flt} \to \mathsf{flt} \quad
    \vdash \mathsf{sum} : (n \hat\Rightarrow \mathsf {flt}) \to \mathsf{flt}  \\
    \inferrule[Var]{x{:}t \in \Gamma}{\Gamma \vdash x : t} \quad
    \inferrule[Const]{\vdash c : \overline{t_i \to}~t' \quad
    \overline{\Gamma \vdash e_i : t_i}}{\Gamma \vdash c~\overline{e_i} : t'}\\
    \inferrule[Fun]{\Gamma, x{:}t_1 \vdash e : t_2}{\Gamma \vdash \mathsf{fun}~x{:}t_1.~e : t_1 \to t_2} \quad
    \inferrule[For]{\Gamma, i{:}\Fin{n} \vdash e_2 : t}
    {\Gamma \vdash \mathsf{for}~i{:}n.~e_2 : n \hat\Rightarrow t} \\
    \inferrule[Let]{\Gamma \vdash e_1 : t_1 \quad \Gamma, x{:}t_1 \vdash e_2 : t_2}
    {\Gamma \vdash \mathsf{let}~x = e_1;~ e_2 : t_2} \quad
    \inferrule[Ite]{\Gamma \vdash e_1 : \Fin{2} \quad \Gamma \vdash e_2 : t \quad \Gamma \vdash e_3 : t}
    {\Gamma \vdash \mathsf{ite}~e_1~e_2~e_3 : t}
\end{mathpar}
\caption{\Lang{}'s type system.}
\label{fig:types}
\end{figure}

We mechanized \Lang{}, partial evaluation of \Lang{},
\AINF{}, the translation from \Lang{} to \AINF{},
and common subexpression elimination over \AINF{},
using the dependently typed programming language Lean 4~\cite{Moura21}.

\subsection{\Lang{} and Partial Evaluation}
\subparagraph{\Lang{}.}
The \Lang{} grammar uses the set of natural numbers, floating point numbers, variables and indices (Figure~\ref{fig:syntax}), but the distinction between variables and indices is only relevant for \AINF{}.
Types are floating point numbers, products, functions, and arrays,
as well as bounded natural numbers, i.e. $\mathsf{fin}~n$ is the type
of numbers smaller than $n$.
Constants are natural number and floating point literals,
arithmetic symbols,
function application ($\mathsf{app}$),
array access ($\mathsf{get}$),
pair construction ($\mathsf{pair}$),
first and second projection ($\mathsf{fst}$, $\mathsf{snd}$),
and array summation ($\mathsf{sum}$).
\Lang{} terms are variable access, n-ary constant application,
function abstraction, array construction, branching ($\mathsf{ite}$), and let-binding.
We decided to put first-order syntax forms such as function application, array access, pairing,
and the product projections into the constants,
because they are all handled uniformly by the following algorithms,
while the higher-order syntax forms, i.e., the ones that bind variables,
such as function abstraction, array construction,
branching, and let-binding are kept in the terms
because they are all treated differently.

\subparagraph{Intrinsic Types.}
The typing rules for \Lang{} are given in Figure \ref{fig:types}.
First, we give the types for constants.
Note that we use the $\to$ symbol for the typing judgement of constants that take arguments.
This is not to be confused with the type constructor $\hat\to$.
The typing rules for variables (\textsc{Var}),
function abstractions (\textsc{Fun}),
and |let|-bindings (\textsc{Let}) are standard.
The rule \textsc{Const} allows one to apply a constant to a number (possibly zero) of arguments.
For example, as \textsf{app} has type $(t_1 \hat\to t_2) \to t_1 \to t_2$,
the expression $\mathsf{app}~e_1~e_2$ has type $t_2$ when $e_1 : t_1 \hat\to t_2$ and
$e_2 : t_1$. The \textsc{For} rule shows that constructing an array with \textsf{for} requires
an expression of type \textsf{nat} for the size and another expression, which can use the
(numerical) index \textsf{i} and whose type gives the element type of the array.
The \textsc{Ite} rule states that the condition has to be of type $\mathsf{nat}$ and the two branches have to be of the same type (the condition is considered true if nonzero).

\subparagraph{PHOAS.}
Our formal development uses parametric higher-order abstract syntax (PHOAS)~\cite{PfenningE88,Chlipala08},
allowing us to leverage the binders of the host language as binders for the guest language.
Terms are parametrized by an abstract denotation of types $\Gamma$,
and variables contain a value of that type.
By using PHOAS, we can avoid certain technicalities relating to variable binding such as
capture-avoiding substitution, thereby streamlining the implementation.

\subparagraph{Static Size.}
As mentioned above, our array types have the form \lstinline{n => a}, where \lstinline{n} is the size of the array.
The fact that the size of an array is always part of its static type, implies that the
sizes of all arrays are known at compile time.
This guarantees that indexing can be statically checked for out-of-bounds array accesses, ensuring
the absence of run time errors without requiring run time checks.

Because \Lang{} is not polymorphic, expressions operating on arrays are fixed to specific
array sizes. For example, there is no single expression in \Lang{} that can map a function
over an array of \textit{arbitrary} size.
This restriction is alleviated because our language is embedded, allowing
us to reuse polymorphism from the host language. More concretely, we can define a function in the
host language that for each number \lstinline{n} returns a \Lang{} term implementing \lstinline{map} on an array of size \lstinline{n} (here, $\lambda$ belongs to the host language
and \lstinline{fun} belongs to \Lang{}):
\[
  \begin{array}{lll}
  \text{map} &:& (n : \mathds{N}) \to (\Gamma \vdash (t_1 \hat\to t_2)~\hat\to~(n\hat\Rightarrow t_1)~\hat\to~(n\hat\Rightarrow t_2)) \\
  \text{map} &:=& \lambda n.~\mathsf{fun~f~a.~for~i.~f~a[i]}
  \end{array}
\]

\subparagraph{Termination.}
The use of static array sizes ensures that array indexing is total.
In fact, every language construct in \Lang{} is deterministic and terminating, making the language total; hence it is not Turing-complete.
Most notably, we eschew general recursion in favor of the more well-behaved looping construct \lstinline{for}.
The lack of non-termination allows us to give a simple denotational semantics
and guarantees termination of normalization, as described next.

\subparagraph{Normalization by Evaluation (NbE).}
Normalization is defined in Figure~\ref{fig:norm} by a denotation for types ($\llbracket t \rrbracket_\Gamma : \mathrm{Type}$),
a corresponding denotation for terms and constants (such that
when $e$ has type $t$, then $(\llbracket e \rrbracket_\Gamma : \llbracket t 
\rrbracket_\Gamma)$), as well as functions |quote| ($\eta$), |splice| ($\eta'$), and |norm|.
Note the additional argument $\Gamma$ -- this is a peculiarity of the PHOAS representation, where $\Gamma$ determines the denotation of variables.
This argument can take different values, depending on which information we want to extract from a term.
For example, when pretty-printing a term we want to produce a string,
so we associate every variable also with a string $(\Gamma~t := \text{String})$.
For NbE,
every term should be translated to the denotation of their type,
so we associate every variable to the denotation $\llbracket\cdot\rrbracket_\Gamma$ of its type using $\Gamma$.
The function |norm| takes a value of type $(\forall \Gamma.~\Gamma \vdash t)$
and returns one of the same type. The quantification means that we can only use
variables that were created by the language's binding constructs, so the type represents closed terms.

The denotation of the natural number is a natural number term,
the denotation of a floating point number is a floating point number term,
the denotation of a product is a product of the denotations,
the denotation of a function is a function of the denotations,
the denotation of an array is a function
from a natural number term to a denotation of the array's content.
Later, for code generation,
we will again distinguish functions and arrays.
But for the purpose of normalization by partial evaluation (NbE),
we model arrays as functions so as to reduce the need for rules for both of them.

\begin{figure}
\begin{subfigure}[b]{.45\textwidth}
\begin{center}
\[\begin{array}{lcl}
\llbracket \cdot \rrbracket_\Gamma & : & \mathrm{Ty} \to \mathrm{Type} \\
\llbracket \Fin{n} \rrbracket_\Gamma & = & \Gamma \vdash \Fin{n}\\
\llbracket \mathsf{flt} \rrbracket_\Gamma & = & \Gamma \vdash \mathsf{flt}\\
\llbracket t_1 \;\hat\times\; t_2 \rrbracket_\Gamma & = & \llbracket t_1 \rrbracket_\Gamma \times \llbracket t_2 \rrbracket_\Gamma \\
\llbracket t_1 \;\hat\to\; t_2 \rrbracket_\Gamma & = & \llbracket t_1 \rrbracket_\Gamma \to \llbracket t_2 \rrbracket_\Gamma \\
\llbracket n \;\hat\Rightarrow\; t \rrbracket_\Gamma & = & (\Gamma \vdash \mathsf{nat}) \to \llbracket t \rrbracket_\Gamma \\
\end{array}\]
\end{center}
\caption{Denotation of types.}
\end{subfigure}
\begin{subfigure}[b]{.55\textwidth}
$~~\llbracket \mathsf{ite}~e_1~e_2~e_3 \rrbracket =$
\vspace{-0.2em}
\begin{center}
$
  \begin{cases}
  \llbracket e_2 \rrbracket & \text{if}~\llbracket e_1 \rrbracket = 1 \\
  \llbracket e_3 \rrbracket & \text{if}~\llbracket e_1 \rrbracket = 0 \\
  \eta' (\mathsf{ite}~\llbracket e_1 \rrbracket~(\eta \llbracket e_2 \rrbracket)~
                          (\eta \llbracket e_3 \rrbracket)) & \text{otherwise}
  \end{cases}
$
\end{center}
\caption{Denotation of \textsf{ite}.}
\end{subfigure}
\begin{subfigure}[b]{.45\textwidth}
\begin{center}
\[\begin{array}{ll}
\llbracket \cdot \rrbracket &: (\llbracket \cdot \rrbracket_\Gamma \vdash t) \to \llbracket t \rrbracket_\Gamma \\
\llbracket x \rrbracket &=
  x \\
\llbracket \mathsf{fun}~i.~e \rrbracket &=
  \lambda i.~\llbracket e~i \rrbracket \\
\llbracket \mathsf{for}~i.~e \rrbracket &=
  \lambda i.~\llbracket e~i \rrbracket \\
\llbracket c~\overline{e} \rrbracket &=
  \llbracket c \rrbracket ~ \overline{\llbracket e \rrbracket} \\
\llbracket \mathsf{let}~e_1;~e_2 \rrbracket &=
  \llbracket e_2 \rrbracket ~ \llbracket e_1 \rrbracket  \\[1em]
\llbracket \mathsf{app} \rrbracket~e_1~e_2 &= e_1~e_2 \\
\llbracket \mathsf{get} \rrbracket~e_1~e_2 &= e_1~e_2 \\
\llbracket \mathsf{pair} \rrbracket~e_1~e_2 &= (e_1,~e_2) \\
\llbracket \mathsf{fst} \rrbracket~e &= e.1 \\
\llbracket \mathsf{snd} \rrbracket~e &= e.2 \\
\llbracket \mathsf{sum} \rrbracket~e &= \eta'~(\mathsf{sum}~(\eta~e)) \\
\llbracket \hat+ \rrbracket~n_1~n_2 &= n_1 + n_2 \\
\llbracket \hat+ \rrbracket~e_1~e_2 &= e_1 ~\hat+~ e_2 \\
\end{array}\]
\end{center}
\caption{Denotation of terms and constants.}
\end{subfigure}
\begin{subfigure}[b]{.55\textwidth}
\begin{center}
\[
\begin{array}{ll}
\eta &: \forall t.~\llbracket t \rrbracket_\Gamma \to (\Gamma \vdash t) \\
\eta~(t_1~\hat\to~t_2)~ e &= \mathsf{fun}~i:t_1.~\eta~t_2~(e~(\eta'~t_1~i)) \\
\eta~(n_1~\hat\Rightarrow~t_2)~ e &= \mathsf{for}~i:n_1.~\eta~t_2~(e~(\eta'~t_1~i)) \\
\eta~(t_1~\hat\times~t_2)~ e &= \mathsf{tup}~(\eta~t_1~e.1)~(\eta~t_2~e.2) \\
\eta~\mathsf{nat}~ e &= e \\
\eta~\mathsf{flt}~ e &= e \\[1em]
\eta' &: \forall t.~(\Gamma \vdash t) \to \llbracket t \rrbracket_\Gamma \\
\eta'~(t_1~\hat\to~t_2)~ e &= \lambda i.~\mathsf{app}~(\eta'~t_2~e)~(\eta~t_1~i) \\
\eta'~(n_1~\hat\Rightarrow~t_2)~ e &= \lambda i.~\mathsf{get}~(\eta'~t_2~e)~(\eta~t_1~i) \\
\eta'~(t_1~\hat\times~t_2)~ e &= (\eta'~t_1~(\mathsf{fst}~e),~\eta'~t_2~(\mathsf{snd}~e)) \\
\eta'~\mathsf{nat}~ e &= e \\
\eta'~\mathsf{flt}~ e &= e \\[1em]
\text{norm} &: (\forall \Gamma.~\Gamma \vdash t) \to (\forall \Gamma.~\Gamma \vdash t) \\
\text{norm}~e &= \eta~\llbracket e \rrbracket
\end{array}
\]
\end{center}
\caption{Quote $\eta$, splice $\eta'$, and normalization $\text{norm}$. }
\end{subfigure}

\caption{Typed partial evaluation.}
\label{fig:norm}
\end{figure}

The quote $\eta$ and splice $\eta'$ functions
perform eta-expansion of terms
by recursion over the types.
Quote turns denotations into terms, and splice turns terms back into denotations.
The denotation of a \Lang{} term is a corresponding host-language value of that term (i.e., a Lean value in our mechanization).
NbE then evaluates terms
in the environment of splicing,
followed by quoting the denotation back into a term.

Constants denote functions that check for whether their argument is known,
and the partial evaluation of their argument; otherwise, they quote/splice
the term into a denotation of the type.

\subsection{\AINF{} and Common Subexpression Elimination}
\subparagraph{FOAS.}
An essential component for implementing common subexpression elimination
is the ability to compare to terms for equality.
As we cannot decide equality over functions, we have to convert
from parametric higher-order abstract syntax (PHOAS)
to first-order syntax (FOAS) to get decidable equality for identifiers and terms containing variables.

\subparagraph{\AINF{}.}
In \AINF{}, we distinguish between variables $x$ and indices $i$ (Figure~\ref{fig:syntax}).
Variables are introduced by let-binding,
while indices are introduced by functions and loops.
An \AINF{} term is a sequence of pattern-matching let-bindings of primitives,
ending in a final variable (Figure~\ref{fig:syntax}, \AINF{}).
An essential property of \AINF{} is thus,
that it is both maximally fissioned (each for loop just has a single variable as a body)
and maximally flat (an \AINF{} term is a single list of terms without subterms,
executed one after another).
Pattern matching contexts $C$ have one hole for the variable,
and one form for each higher-order argument to any term former,
namely array construction, function abstraction, if-consequence, and if-alternative.
Primitives are constant application, indices, variable access,
function abstraction, and array construction.

Conversion to \AINF{} (Figure~\ref{fig:fission}) exploits the fact that continuation-passing-style
automatically flattens code.
The function $C\llbracket ~e~ \rrbracket~k$ takes as inputs a \Lang{} term $e$,
a pattern matching context $C$, and a continuation $k$,
and returns an \AINF{} term.
In the mechanization the function uses a reader monad as well
to generate unique variable names. %
The function is initialized with the empty pattern matching context,
and the identity continuation; the variable counter is initialized with zero.

Another important helper function is smart binding $\llparenthesis ~p~ \rrparenthesis~k$,
which takes a primitive $p$ and a continuation $k$.
Smart binding ensures that every primitive term passed to it is
bound to a variable name, and that variable name is passed to the continuation.
If the primitive term is a variable already,
this variable name is passed to the continuation;
otherwise the term is bound to a unique variable name, incrementing the counter.

Translation to \AINF{} by $C\llbracket ~e~ \rrbracket~k$
recurses structurally over the term $e$.
In the case of a variable or an index, the term is forwarded to smart binding.
In the case of a constant application (exemplary shown for binary constant application),
first the first subterm is translated, then in the continuation the second subterm is translated, and in the continuation the term is reconstructed as a primitive with variable referencing the name of the translated subterms, which is passed to smart binding to generate a new name for this term, passing the continuation along.
In the case of function abstraction, array construction, and conditional expressions,
the subterms are translated as well, but in adapted contexts,
and the final term is passed as well to smart binding to generate a name for it,
and the continuation is passed along.
Concretely, in the case of function abstraction, the function body is translated in a context which includes the function argument.
In the case of array construction, the array body is translated in a context which includes the iteration variable.
In the case of conditional expressions, the consequence is translated in a context which includes the condition, and the alternative is translated in a context which includes the negation of the condition.
Finally, in the case of let-binding, first the right-hand side of the binding is translated, and then the body of the binding.

\begin{figure}
\begin{subfigure}{1\textwidth}
\[\begin{array}{ll}
C \llbracket ~\cdot~ \rrbracket~\cdot &: (\vdash t_1) \to (\text{Var}~t_1 \to~\text{\AINF{}}~t_2) \to~\text{\AINF{}}~t_2 \\
C \llbracket ~x~ \rrbracket~k &=
  \llparenthesis~ x ~\rrparenthesis~k \\
C \llbracket ~i~ \rrbracket~k &=
  \llparenthesis~ i ~\rrparenthesis~k \\
C \llbracket ~c~e_1~e_2~ \rrbracket~k &=
  C \llbracket e_1 \rrbracket~ \lambda~x_1.~~%
  C \llbracket e_2 \rrbracket~ \lambda~x_2.~~%
  \llparenthesis~ c~x_1~x_2 ~\rrparenthesis~k \\
C \llbracket \mathsf{fun}~i{:}t.~e \rrbracket~k &=
  C[\mathsf{fun}~i{:}t.~\cdot] \llbracket e \rrbracket~\lambda~x.~~%
  \llparenthesis~ \mathsf{fun}~i{:}t.~x ~\rrparenthesis~k \\
C \llbracket \mathsf{for}~i{:}e_1.~e_2 \rrbracket~k &= 
  C \llbracket e_1 \rrbracket~\lambda~x_1.~~%
  C[\mathsf{for}~i{:}x_1.~\cdot] \llbracket e_2 \rrbracket~\lambda~x_2.~~%
  \llparenthesis~ \mathsf{for}~i{:}x_1.~x_2 ~\rrparenthesis~k \\
C \llbracket \mathsf{ite}~e_1~e_2~e_3 \rrbracket~k &=
  C \llbracket e_1 \rrbracket~ \lambda~x_1.~~%
  C[\mathsf{if}~x_1{=}0.~\cdot]     \llbracket e_2 \rrbracket~\lambda~x_2.~~%
  C[\mathsf{if}~x_2{\not=}0.~\cdot] \llbracket e_3 \rrbracket~\lambda~x_3.~~%
  \llparenthesis \mathsf{ite}~x_1~x_2~x_3 \rrparenthesis~k \\
C \llbracket \mathsf{let}~e_1;~e_2 \rrbracket~k &=
  C \llbracket e_1 \rrbracket~\lambda~x_1.~~%
  C \llbracket e_2~x_1 \rrbracket~k \\
\end{array}\]
\caption{Fission.}
\end{subfigure}
\begin{subfigure}{1\textwidth}
\[\begin{array}{lll}
\llparenthesis~\cdot~\rrparenthesis &:& \text{Prim}~t_1 \to (\text{Var}~t_1 \to \text{\AINF{}}~t_2) \to \text{\AINF{}}~t_2 \\
\llparenthesis~ x ~\rrparenthesis~k &=& k~x \\
\llparenthesis~ p ~\rrparenthesis~k &=& \mathsf{let}~x=p;~k~x ~~~~~ \text{where} ~x~ \text{unique}
\end{array}\]
\caption{Smart binding.}
\end{subfigure}
\caption{Fission with smart binding.}
\label{fig:fission}
\end{figure}

\subparagraph{CSE.}
In addition to deciding equality for terms,
a further complication with common subexpression elimination is that
we also need to decide equality in the presence of already established equalities.
For example consider the term |x=v, y=v, z=(x+y), q=(y+x), t=z+1, r=q+1, |$\ldots$.
Correct CSE should eliminate it to |x=v, z=(x+x); t=z+1; rename [y->x; q->z; r->t], |$\ldots$.
Notice how the later eliminations are dependent on the earlier ones.
If we simply rename the remaining term every time
we detect a variable to be redundant,
then this algorithm would perform exponentially worse,
because every renaming is a traversal over the whole remaining term, and
CSE itself is already a traversal over the whole term.
To keep everything with a single traversal,
we adapt CSE to carry a renaming with it,
which is applied just before a term is checked for redundancy.

CSE (Figure~\ref{fig:cse}) takes a renaming, a naming, and an \AINF{} term,
and returns a new \AINF{} term of the same type.
A renaming is a list of pairs of variables of the same type,
representing that the first variable is to be replaced by the second.
A naming is a list of pairs of a primitive and a variable of the same type,
representing that the primitive term has been previously bound to that variable.
CSE works by structural recursion over the term.
When the input term is just a variable, it simply applies the renaming.
When the input term is a let-binding, then the renaming is applied to the term as well.
The renamed term is looked up in the list of previously defined terms.
If the term has not been bound to a variable name already ($\text{none}$),
then the term is now let-bound to a variable, inside a renamed pattern matching context C.
CSE proceeds with the remaining terms $a$,
remembering that the term $p'$ has been bound to $\sigma$,
so that future redundant occurrences of $p'$ can be eliminated.
If the term has already been bound to a variable name  ($\text{some}~x'$),
then no let-binding is produced,
but only the renaming is extended to replace future references to $x$
to the already existing $x'$ instead.
CSE proceeds with the remaining terms $a$,
remembering that the term $p'$ has been bound to $\sigma$,
so that future redundant occurrences of $p'$ can be eliminated.

\begin{figure}
\[
\begin{array}{lll}
  \text{Ren} &=& [(t_1:\text{Ty})\times \text{Var}~t_1\times \text{Var}~t_1] \\
  \text{Nam} &=& [(t_1:\text{Ty})\times \text{Prim}~t_1\times \text{Var}~t_1] \\
  \text{CSE} &:& \text{Ren} \to \text{Nam} \to \text{\AINF{}}~t_2 \to \text{\AINF{}}~t_2 \\
  \text{CSE}~r~\sigma~x &=& (\sigma,~\text{ren}_r~x) \\
  \text{CSE}~r~\sigma~(\mathsf{let}~C[x=p];~a) &=&
    \begin{cases}
    \text{CSE}~r~(\mathsf{let}~C'[x=p']; \sigma)~a \\
      ~~~~~~~~~~~~~~~~~~~~~ \text{if}~~\text{lookup}~\sigma~C'~x~p' = \text{none} \\[1em]
    \text{CSE}~([x:=x'] :: r)~\sigma'~a \\
      ~~~~~~~~~~~~~~~~~~~~~ \text{if}~~ \text{lookup}~\sigma~C'~x~~p' = \text{some}~(x', \sigma')
    \end{cases} \\
  && \text{where}~ p' = \text{ren}_r~p  \\
  && \text{where}~ C' = \text{ren}_r~C  \\
\end{array}
\]
\caption{Common subexpression elimination.}
\label{fig:cse}
\end{figure}

\subsection{Mechanization in Lean}

In this section, we present excerpts from the Lean mechanization and relate them to the paper formalization. 
The type of terms |Tm| corresponds to $(\Gamma \vdash t)$ and features constructors for variables and constants (|var|, |cst0| etc.).
In the paper, we do not write these constructors explicitly, so we would write |x| rather than |var x|.
We define the following types corresponding to the above definitions of syntax (Figure~\ref{fig:syntax}) in Lean.

\begin{leanlisting}
inductive Var : Ty → Type -- Variables Var
inductive Par : Ty → Type -- Indices   Idx

inductive Ty                           -- Types t
inductive Const0 : Ty → Type           -- Constants c (nullary)
inductive Const1 : Ty → Ty → Type      -- Constants c (unary)
inductive Const2 : Ty → Ty → Ty → Type -- Constants c (ternary)
inductive Tm (Γ: Ty → Type): Ty → Type -- Terms e

inductive Prim : Ty → Type -- Primitives p
inductive Env  : Type      -- Scoped Contexts C
inductive AINF : Ty → Type -- AINF a
\end{leanlisting}

In particular, we define the following functions in Lean.
The function |Ty.de| corresponding to denotation of types $\llbracket \cdot \rrbracket_\Gamma$,
|quote| to $\eta$ and |splice| to $\eta'$,
|Const0.de|, |Const1.de|, |Const2.de| and |Tm.de| were shown as term, constant, and its denotations $\llbracket \cdot \rrbracket$.
Finally, |norm| is defined using |term| denotations and |quote|.

\begin{leanlisting}
def Ty.de (Γ : Ty → Type): Ty → Type

def quote {Γ} : {α : Ty} → Ty.de Γ α → Tm Γ α
def splice {Γ} : {α : Ty} → Tm Γ α → Ty.de Γ α

def Const0.de : Const0 α → Ty.de Γ α
def Const1.de : Const1 β α → Ty.de Γ β → Ty.de Γ α
def Const2.de : Const2 γ β α → Ty.de Γ γ → Ty.de Γ β → Ty.de Γ α
def Tm.de : Tm (Ty.de Γ) α → Ty.de Γ α

def Tm.norm : (∀ Γ, Tm Γ α) → Tm Γ α
  | e => quote (Tm.de (e _))
\end{leanlisting}

The |smart_bnd| function takes an additional number argument, wrapped inside a reader monad,
which is used for creating fresh variables.
In the paper, we leave this out and just stipulate that the variable is fresh.
The same applies to |toAINF|.
When discussing CSE in the paper, we describe renamings.
The |rename| functions define how a renaming is applied.
CSE also requires us to check equality of expressions, which is done with the |beq| functions.
The CSE function in the paper also calls |lookup|, which is not defined there.
It corresponds to the built-in |ListMap.lookup|.
Our code also contains a function |Env.or|, which merges two environments. This
is used to allow CSE to remove redundancies which appear in different, but compatible,
environments.

In particular, we define the following functions in Lean, corresponding to the functions above:

\begin{leanlisting}
def Prim.beq : Prim α → Prim α → Bool
def AINF.beq : AINF α → AINF α → Bool
def AINF.smart_bnd : Env → Prim α → (VPar α → Counter (AINF β)) → Counter (AINF β)
def Tm.toAINF (e : Tm VPar α) : AINF α
def Var.rename : Ren → Var α → Var α
def VPar.rename (r: Ren): VPar α → VPar α
def Env.rename (r: Ren): Env → Env
def Prim.rename (r: Ren): Prim α → Prim α
def AINF.rename (r: Ren): AINF α → AINF α
def AINF.rename (r: Ren): AINF α → AINF α
def Env.or (Γ: Env) (Δ: Env): Tern → Option Env := fun t => match Γ, Δ with
def RAINF.upgrade : RAINF → Var b → Env → Option RAINF
def AINF.cse' : Ren → RAINF → AINF α → (RAINF × VPar α)
def merge: RAINF → VPar α → AINF α

def AINF.cse : Ren → RAINF → AINF α → AINF α
  | r, σ, a => let (b, c) := a.cse' r σ; merge b.reverse c
\end{leanlisting}

\subsection{Proofs}

In this section, we show that normalization and translation to \AINF{} are type-preserving, i.e.
given a well-typed term, they always produce a valid term of the same type. We also show that
translation to \AINF{} produces maximally fissioned terms.

We use an intrinsically typed approach where the type system of the object language is included in the
encoding of the data type for the language's syntax.
Therefore, the host languages type system ensures only well-typed
terms can be constructed.

Following an intrinsically typed approach means that the soundness properties hold simply because
our (appropriately typed) definitions type check. We do not have to state and prove explicit, separate theorems,
because the types of the functions already carry the necessary information.

\begin{theorem}[Well-typedness of Optimization] \quad\newline
Our optimization procedure is terminating and type preserving.
\end{theorem}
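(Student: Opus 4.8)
The plan is to exploit the \emph{intrinsically typed} encoding so that type preservation becomes essentially free, and then to argue termination separately for the three stages of the pipeline --- normalization by evaluation ($\text{norm}$), fission to \AINF{} ($\text{toAINF}$), and common subexpression elimination ($\text{cse}$). For \textbf{type preservation}, I would observe that the whole optimizer is the composition $\text{cse} \circ \text{toAINF} \circ \text{norm}$, and that each component is presented as a Lean function whose signature already fixes the object-language type index: $\text{norm} : (\forall \Gamma.~\Gamma \vdash t) \to (\forall \Gamma.~\Gamma \vdash t)$, $\text{toAINF} : \text{Tm}~\text{VPar}~\alpha \to \text{\AINF{}}~\alpha$, and $\text{cse} : \text{Ren} \to \text{RAINF} \to \text{\AINF{}}~\alpha \to \text{\AINF{}}~\alpha$. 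Since the index $t$ (resp.\ $\alpha$) is identical on input and output, each signature \emph{is} the statement that the stage preserves types; no separate induction over typing derivations is needed. Type preservation of the pipeline therefore holds exactly because the three definitions are accepted by Lean's type checker, and a composition of type-preserving stages is again type-preserving.

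For \textbf{termination}, I would rely on the fact that Lean~4 admits only total functions, so that acceptance of each definition is itself the termination certificate; the remaining work is to explain why each definition is total. For $\text{norm}$, the denotation $\llbracket\cdot\rrbracket$ recurses structurally on terms while the quote/splice pair $\eta,\eta'$ recurse on the \emph{type}, which is strictly smaller at every recursive call (including the mutually recursive $\hat\to$ and $\hat\Rightarrow$ cases); the semantic justification is that \Lang{} is total --- static array sizes make indexing total and $\mathsf{for}$ iterates a fixed finite number of times rather than introducing general recursion. For $\text{toAINF}$, the function $C\llbracket\cdot\rrbracket~k$ recurses on the source term, threading but never enlarging it via the continuation $k$. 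For $\text{cse}$, the traversal $\text{cse}'$ consumes one let-binding of the \AINF{} list per step while the renaming and naming are carried as accumulators, and $\text{merge}$ reassembles the reversed accumulator, so the recursion is structural on the finite sequence of bindings.

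The hard part will be discharging termination in the \textbf{PHOAS setting}, where several functions recurse ``under binders'' by instantiating the parametric body. This appears already in the denotation, e.g.\ $\llbracket \mathsf{fun}~i.~e \rrbracket = \lambda i.~\llbracket e~i \rrbracket$, and again in fission, e.g.\ $C\llbracket \mathsf{let}~e_1;~e_2 \rrbracket~k = C\llbracket e_1 \rrbracket~\lambda x_1.~C\llbracket e_2~x_1 \rrbracket~k$; in both the recursive call is on an \emph{application} ($e~i$, $e_2~x_1$) rather than on a syntactic subterm, which Lean's structural-recursion checker does not accept as decreasing for free. I expect to supply an explicit size measure on PHOAS terms that is invariant under instantiating the parametric variable --- available because the parameter $\Gamma$ is used only positively, to carry variables --- and to prove this measure strictly decreases at each recursive call. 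A secondary subtlety is to confirm for $\text{cse}$ that the accumulated naming never feeds back into the term, so that the decreasing measure remains the length of the remaining binding list. Once these measures are in place, totality --- and with the first paragraph, both termination and type preservation --- follows.
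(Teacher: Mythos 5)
Your proposal matches the paper's argument essentially exactly: type preservation falls out of the intrinsically typed signatures being accepted by Lean's type checker, and termination is certified by Lean's totality requirement together with \Lang{} being a total language (no general recursion, static array sizes). Your extra paragraph on recursing under PHOAS binders is a reasonable precaution the paper does not discuss, though in practice Lean's structural-recursion checker accepts calls like $\llbracket e~i \rrbracket$ when the body is a function stored in a constructor of the (nested) inductive, so the explicit size measure you anticipate is not actually needed.
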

\begin{proof}
Termination is ensured by Lean's built-in termination check.
The fact that normalization terminates relies on \Lang{} being
a total language. In particular, the absence of unbounded
recursion and the combination of static array sizes with
intrinsic typing avoids infinite loops and out-of-bounds accesses, ensuring that our normalization function
always successfully terminates.
Type preservation is ensured by intrinsically-typed mechanization; consider the types of normalization and CSE in Lean:
\begin{leanlisting}
def Tm.norm : (∀ Γ, Tm Γ α) → Tm Γ α
  | e => quote (Tm.de (e _))
def AINF.cse : Ren → RAINF → AINF α → AINF α
  | r, σ, a => let (b, c) := a.cse' r σ; merge b.reverse c
\end{leanlisting}
Intrinsic typing defines the
typing of the object language (here, \Lang{}) using the typing of the host language
(here, Lean), so the host language's type checker prevents the creation of ill-typed object language programs.
This means that an element of |(∀ Γ, Tm Γ α)| is a well-typed \Lang{}
program and an element of |AINF α| is a well-typed \AINF{} term. Further, given a well-typed term, each function returns
a well-typed term, which is what we mean by soundness with regard to the type system.
\end{proof}

\begin{theorem}[Well-typedness of Translation] \quad\newline
Our translation procedure is terminating and type preserving.
\end{theorem}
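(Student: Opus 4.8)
The plan is to mirror the proof of the preceding theorem, since the translation is likewise an intrinsically-typed, structurally-recursive Lean function, so the two required properties follow from the same two mechanisms. The translation is realized by |Tm.toAINF : Tm VPar α → AINF α|, i.e.\ the fission function $C\llbracket\cdot\rrbracket$ of \cref{fig:fission} run together with smart binding $\llparenthesis\cdot\rrparenthesis$ and the variable counter initialized at zero. For type preservation there is nothing separate to prove: the Lean type of |Tm.toAINF| already asserts that a well-typed \Lang{} term of type $\alpha$ is sent to a well-typed \AINF{} term of the \emph{same} type $\alpha$, and because the development is intrinsically typed this holds exactly because the definition type-checks in the host language. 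The same remark covers all the bookkeeping a manual proof would otherwise have to track — that the continuation $k : \text{Var}~t_1 \to \text{\AINF{}}~t_2$, the pattern-matching contexts $C$, and the primitives emitted by smart binding are threaded with consistent type indices — since each of these invariants is encoded in the indexed types and discharged by Lean's checker.

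For termination, I would argue that $C\llbracket e\rrbracket~k$ recurses on the \Lang{} term $e$: in the constant-application, \textsf{for}, \textsf{ite}, and (for the first subterm) \textsf{let} cases the recursive calls are on proper subterms, while the variable and index cases terminate immediately by handing off to smart binding, which is itself non-recursive (it either returns $k~x$ or emits a single |let| and calls $k$). Outside the \textsf{let} body this is plain structural recursion over a finite syntax tree, which Lean's built-in termination check accepts directly.

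\textbf{The main obstacle} I anticipate is the \textsf{let} case in the PHOAS setting, where $C\llbracket \mathsf{let}~e_1;~e_2\rrbracket~k = C\llbracket e_1\rrbracket~\lambda x_1.~C\llbracket e_2~x_1\rrbracket~k$ recurses on the instantiated body $e_2~x_1$ rather than on a syntactic subterm of $e_2$. Under parametric higher-order abstract syntax $e_2$ is a metalanguage function, so instantiating it at a fresh variable need not decrease a naive structural measure. I would resolve this by equipping \Lang{} terms with a size measure that is provably invariant under instantiation of a bound variable — a standard consequence of parametricity, since the size cannot depend on the payload carried by the abstract variable type — and then running the recursion on that measure. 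This is the one place where termination is not completely immediate; everything else follows from structural recursion and the intrinsic typing discipline already used for the optimization procedure.
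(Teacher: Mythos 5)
Your proposal takes essentially the same route as the paper: type preservation is discharged by the intrinsic typing of |Tm.toAINF : Tm VPar α → AINF α| (the definition type-checking is the proof), and termination is delegated to Lean's built-in termination checker. The one divergence is the obstacle you anticipate in the \textsf{let} case: no parametricity-based size measure is needed, because in a PHOAS inductive the body $e_2$ is a function-valued constructor argument, so the recursor already provides an induction hypothesis for $e_2~x_1$ at every argument and Lean accepts the recursive call as structural.
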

\begin{proof}
Again, termination is guaranteed by Lean's termination checker.
The argument for type preservation is similar to the one above: As both \Lang{} and \AINF{} are
defined using intrinsic typing, we can only construct well-typed programs.
Consider the type of |toAINF| (we omit the definition):
\begin{leanlisting}
def Tm.toAINF (e : Tm VPar α) : AINF α
\end{leanlisting}
If one tried to define |toAINF| in a way that produces an ill-typed program,
the definition would be rejected by the type checker. 
\end{proof}

Finally, \AINF{} is inductively defined to be maximally fissioned,
i.e., as a list of primitives without subterms,
therefore the act of translating \Lang{} terms into \AINF{}
in a total programming language performs loop fission by definition.

\begin{theorem}[Maximal Fission] \quad\newline
Our translation into \AINF{} produces terms with maximal fission.
\end{theorem}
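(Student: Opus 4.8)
The plan is to argue, exactly as in the two preceding well-typedness theorems, that maximal fission is not a property that must be \emph{established} about the output of the translation, but rather a structural invariant that is \emph{built into} the \AINF{} data type itself, and is therefore enjoyed automatically by every inhabitant of that type. First I would make precise what ``maximal fission'' means syntactically: a term is maximally fissioned when the body of every loop (and, more generally, of every scope-introducing construct) is a single variable rather than a compound subexpression, so that nothing remains inside a loop that could be split off into a separate loop. The claim is then that this is precisely the shape forced by the grammar of \AINF{} in \cref{fig:syntax}.

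Next I would inspect the definitions of primitives $p$ and scope contexts $C[\cdot]$. The only loop-forming primitive is $\mathsf{for}~i{:}n.~x$, whose body is syntactically a variable $x$, never an arbitrary \AINF{} term; likewise the sole shape of an \AINF{} term is $\mathsf{let}~C[x=p];~a$, where $C$ merely accumulates binders and the hole is filled by a single binding $x=p$ with $p$ a primitive. Consequently, by a trivial induction on the structure of $a$ (or simply by inspection of the constructors), every loop appearing in any well-formed \AINF{} term has a single-variable body and is thus already maximally fissioned: no loop body can be decomposed further, because the grammar admits no compound loop body in the first place. The key conceptual point here is that, unlike type preservation — which concerns the \emph{index} $\alpha$ — maximal fission is encoded in the shape of the \AINF{} datatype itself, so mere membership in $\mathsf{AINF}~\alpha$ already witnesses it.

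Finally I would close the argument by appealing to the intrinsically-typed mechanization. The translation $\mathsf{toAINF}\colon \mathsf{Tm}~\mathsf{VPar}~\alpha \to \mathsf{AINF}~\alpha$ has codomain $\mathsf{AINF}~\alpha$, so Lean's type checker guarantees that whatever it returns is a genuine inhabitant of the \AINF{} type, hence — by the structural invariant just established — maximally fissioned. Any attempt to write a variant of $\mathsf{toAINF}$ whose result placed a compound expression inside a loop body would simply fail to type-check, because no such term can be constructed. Thus the property holds ``for free'' from the fact that the definition type checks, and no separate proof obligation remains.

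The step I expect to be the main obstacle is the first one: convincingly bridging the informal optimization notion of ``maximal loop fission'' to the formal single-variable-body invariant. One must argue both that a loop whose body is a single variable genuinely cannot be fissioned further, and — conversely — that the recursive clauses of $C\llbracket\cdot\rrbracket$ in \cref{fig:fission} actually push all intermediate bindings generated while translating a loop body out into separate single-variable loops under the extended context $C[\mathsf{for}~i{:}x_1.~\cdot]$, rather than leaving compound work inside. Once this correspondence is accepted, the remainder is immediate from the datatype definition and intrinsic typing.
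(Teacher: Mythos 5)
Your proposal is correct and follows essentially the same route as the paper: the paper's proof likewise observes that the inductive definition of \AINF{} (a flat list of bindings whose right-hand sides are non-nested primitives, with each assignment constituting its own loop) makes maximal fission a structural invariant of the datatype, and then concludes that |toAINF| produces maximally fissioned terms simply because its codomain is |AINF α| and Lean's termination checker guarantees totality. The bridging concern you flag at the end is real but is left implicit in the paper as well, which treats ``body of each assignment is a single primitive'' as the definition of maximal fission rather than proving anything further about the recursive clauses of the translation.
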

\begin{proof}
Consider the definition of \AINF{} terms:
\begin{leanlisting}
inductive AINF : Ty → Type
  | ret : VPar α → AINF α
  | bnd : Env → Var α → Prim α → AINF β → AINF β
\end{leanlisting}
Here, a value of type |VPar α| can be a variable or a parameter. A value of
type |Prim α| is a primitive (not nested) operation.
The constructor |bnd| represents a variable assignment while |ret| returns a variable or parameter and represents the end of the program. From this inductive definition,
it is apparent that all \AINF{} terms have a flat structure where nested expressions
are impossible. Recall that, in \AINF{}, each assignment is considered its own separate loop. Because the body of each assignment only contains a single primitive, each loop has a body only consisting of one operation and hence an \AINF{} term is guaranteed to be maximally fissioned. Because the translation function |toAINF| has output type |AINF α|,
it can \textit{only} produce such maximally fissioned terms. Further, Lean's termination
checker ensures that |toAINF| is total, and so always returns an \AINF{} term
in finite time.
\end{proof}

\section{Related Work}
\label{sec:related-work}

\subsection{Intermediate Languages}

Early work by Steele~\cite{Steele78} implemented a continuation-passing-style (CPS) IR in a functional compiler,
stressing the suitability of CPS for compilation,
as it closely mimics how control flow is expressed
with jumps in hardware instructions,
and makes evaluation order explicit in the syntax.
Appel~\cite{Appel1992} %
observed that beta-reductions in the lambda calculus
are unsound in the presence of side effects as they could duplicate the effect.
Yet, CPS, which makes evaluation order explicit, enables to perform certain optimizations,
such as
dead code elimination (DCE), and
common subexpression elimination (CSE),
by exploiting that in CPS every subterm is referenced by a unique name.

Sabry and Felleisen~\cite{Sabry93}
identified that additional power of compiling in CPS~\cite{Plotkin75}
corresponds to the additional rules of the monadic computational language~\cite{Moggi89}.
Of particular importance is the so-called associativity law of the monad,
i.e., the let-let commuting conversion, enabling the flattening of code.
Then, Flanagan~\cite{FlanaganSDF93}
coined the name ``A-normal form (ANF)'' for the now popular IR,
which in contrast to CPS, expresses sequential execution
by simple let-binding rather than continuations.
The difference between ANF and the monadic language
is that ANF forbids nested let-bindings,
i.e., code must be normal with regard to the associativity rule of the monad.

However, Kennedy~\cite{Kennedy07}
showed that moving from CPS to ANF did not take into account branching.
More precisely,
while the let-let commuting conversion enables the flattening of code,
the let-if commuting conversion \emph{duplicates} code into each branch,
in the worst case leading to blow-up of code size exponential in the number of branches.
Given that recursion always includes a branch
for base case(s) and step case(s),
the same problem appears with recursion.
Kennedy therefore argued for a return to CPS.

The issue was resolved by Maurer et al.~\cite{MaurerDAJ17},
who provided an implementation of ANF for use in the Glasgow Haskell Compiler (GHC), and further simplified by Cong et al.~\cite{CongOER19}
who provided implementations for MiniScala and Lightweight-Modular-Staging (LMS). %
Cong showed that it is possible to combine the simplicity of let-bindings
for sequential execution and the power of continuations for further
control flow, by adding control operators to ANF,
which enable to capture the current continuation.

In our work,
we highlight the importance of commuting conversions,
and extend the idea of having an intermediate language that intrinsically
encodes maximal let-let conversion
to an intermediate language that also intrinsically encodes
maximal let-for and let-if commuting conversion, without exponential blow-up of code size.
Further, as this work lies in the context of array programming,
recursion is not often necessary, and can thus be avoided.

\subsection{Array Programming}

Shaikhha et al.~\cite{Shaikhha19} present a differentiable programming language which is an
extension of the lambda calculus. For array computations, they use an approach based on
higher-order functions. It directly represents the duality of functions and arrays
through built-in functions |build| for creating an array from a function and |get| for
turning an array into a function. The fact that |get| is a left inverse of |build| leads
to the equivalence |get (build n e) i ≡ e i|, which can be used for optimization.
Another strand of research makes use of the standard technique of \textit{rewriting strategies} to
optimize functional array programs~\cite{Hagedorn20,Bohler23}, suggesting the viability of
standard techniques from term rewriting for optimizing array programs.
Liu et al.~\cite{Liu22} present a framework that can express a variety of optimizations through formally verified term rewriting,
achieving competitive performance; however, CSE is not addressed.
Their representation is first-order and
features an array generation construct similar to the one in \Lang{}.
Optimization in \Lang{} is not based on rewriting,
but instead uses partial evaluation.

Feldspar~\cite{Axelsson10} is a DSL for array computations in Haskell.
It features a |parallel| construct similar to our |for| constructor,
as well as |while| loops. Feldspar is compiled to C and performs standard optimizations like fusion,
as well as copy propagation and loop unrolling. Feldspar's backend uses a dataflow graph and
an imperative intermediate representation, whereas our intermediate representation is functional and specifically
designed to support optimizations on array programs.

SaC~\cite{Scholz03} is a functional first-order array language. Array
computations are expressed using \textit{with-loops}, which consist of at least one generator
and one operator. Each generator consists of an index range and an expression giving the value
of the output array at a given index in the range. The operator can provide default values
for indices not included in any generator, a base array that should be modified by the
generators, or it can describe an aggregation.
More recently, SaC has added support for \textit{tensor comprehensions}~\cite{Scholz19},
which drop the operator part and add pattern matching on indices as well as bound and shape
inference, making the notation more lightweight. Similar to our approach, their
tensor comprehensions do not support summation, which is added in the form of a built-in
function.
SaC's optimizations are not based on a logical foundation,
but consist of a pipeline of optimization algorithms.
In \Lang{} we derive our syntax form for pattern matching on arrays from polarization type theory, enabling additional commuting conversions and thus grounding our optimization algorithm on a logical foundation.

\section{Conclusion}
\label{sec:conclusion}

This paper introduced \AINF{}, a novel intermediate representation for array computations, and \Lang{}, a surface array language.
The proposed optimization algorithm for \AINF{}, based on typed partial evaluation and common subexpression elimination,
simplifies program optimization by interpreting arrays as positively polarized types.
This approach avoids complexities associated with optimization schedules for conventional ANF.
We formalized \AINF{} and \Lang{}. We proved sound the translation from \Lang{} to \AINF{} and optimization.

For future work,
we are working on extending the language with automatic differentiation and probabilistic primitives,
and proving these extensions correct as well.
We are interested in applying our optimization to redundancies generated by automatic differentiation.

\bibliography{bib.bib}

\end{document}